\documentclass{article}

\usepackage{arxiv}

\usepackage[utf8]{inputenc} 
\usepackage[T1]{fontenc}    
\usepackage{hyperref}       
\usepackage{url}            
\usepackage{booktabs}       
\usepackage{amsfonts}       
\usepackage{nicefrac}       
\usepackage{microtype}      
\usepackage{lipsum}

\usepackage{times}
\usepackage{fancyhdr,graphicx,amsmath,amssymb,amsthm}
\usepackage[ruled,lined]{algorithm2e}

\usepackage{multirow}
\usepackage[table,xcdraw]{xcolor}
\usepackage{tablefootnote}
\usepackage{float}

\include{pythonlisting}
\newtheorem{theorem}{Theorem}

\title{Large-scale Causal Approaches to Debiasing Post-click Conversion Rate Estimation with Multi-task Learning}

\author{
  Wenhao Zhang\thanks{Both authors contributed equally to this research. Authorship order determined by coin flip.} $^{\,}$      
  \thanks{Work done while interning at Alibaba Group.}\\
  University of California, Los Angeles\\
  \texttt{wenhaoz@ucla.edu} \\
   \And
  Wentian Bao$^{*}$ \\
  Alibaba Group \\
  \texttt{wentian.bwt@alibaba-inc.com} \\
       \And
  Xiao-Yang Liu \\
  Columbia University \\
  \texttt{xl2427@columbia.edu} \\
     \And
  Keping Yang \\
  Alibaba Group \\
  \texttt{shaoyao@alibaba-inc.com} \\
     \And
  Quan Lin \\
  Alibaba Group \\
  \texttt{tieyi.lq@alibaba-inc.com} \\
     \And
  Hong Wen \\
  Alibaba Group \\
  \texttt{qinggan.wh@alibaba-inc.com} \\
      \And
  Ramin Ramezani \\
  University of California, Los Angeles \\
  \texttt{raminr@ucla.edu} \\
}

\begin{document}
\maketitle

\begin{abstract}
Post-click conversion rate (CVR) estimation is a critical task in e-commerce recommender systems. This task is deemed quite challenging under industrial setting with two major issues: 1) selection bias caused by user self-selection, and 2) data sparsity due to the rare click events. A successful conversion typically has the following sequential events: "exposure $\rightarrow$ click $\rightarrow$ conversion". Conventional CVR estimators are trained in the click space, but inference is done in the entire exposure space. 
They fail to account for the causes of the missing data and treat them as missing at random. 
Hence, their estimations are highly likely to deviate from the real values by large.  In addition, the data sparsity issue can also handicap many industrial CVR estimators which usually have large parameter spaces.

In this paper, we propose two principled, efficient and highly effective CVR estimators for industrial CVR estimation, namely, \textbf{Multi-IPW} and \textbf{Multi-DR}. The proposed models approach the CVR estimation from a causal perspective and account for the causes of missing not at random. In addition, our methods are based on the multi-task learning framework and mitigate the data sparsity issue. Extensive experiments on industrial-level datasets show that our methods outperform the state-of-the-art CVR models.
\end{abstract}

\keywords{Conversion rate estimation, causal inference, selection bias, multi-task learning, recommender systems.}

\section{Introduction}

Selection bias is a widely-recognized issue in recommender systems \cite{schnabel_recommendations_2016,ma_entire_2018,de2014reducing}. For example, music stream services usually suggest genres that have positive user feedbacks (e.g., favorite, share, and buy, etc.), and selectively ignore the ones that are rarely exposed to users \cite{van2013deep}. In this paper, we study the selection bias that exists in the post-click conversion rate (CVR) estimation. 

\begin{figure}[ht]
    \centering
    \includegraphics[width=90mm]{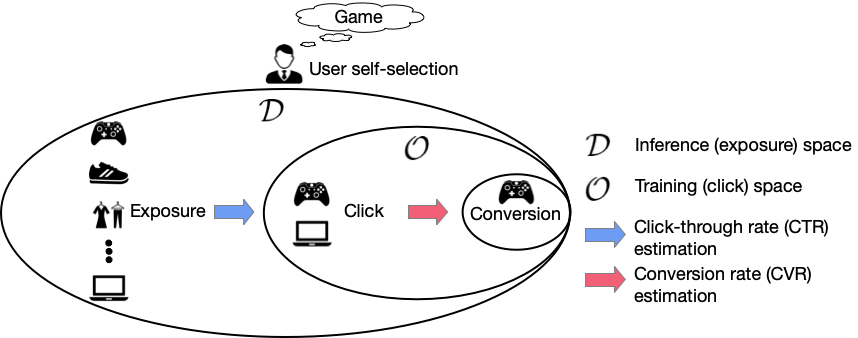}
    \caption{Illustration of the selection bias issue in conventional conversion rate (CVR) estimation. The training space of conventional CVR models is the click space $\mathcal{O}$, whereas the inference space is the entire exposure space $\mathcal{D}$.  The discrepancy of data distribution between $\mathcal{O}$ and $\mathcal{D}$ leads to selection bias in conventional CVR models.}
    \label{fig:selection_bias_illustration}
\end{figure}

\subsubsection*{Problem formulation}

Post-click conversion rate (CVR) estimation is a critical task in e-commerce recommender systems \cite{zhu2017optimized,yamaguchi2016web}. A typical e-commerce transaction has the following sequential events: "exposure $\rightarrow$ click $\rightarrow$ conversion" \cite{ma_entire_2018}. Post-click conversion rate indicates the probability of transitions from click to conversion. Typically,  when training CVR models, we only include the items that customers clicked on as we are unaware of the conversion feedback  of the items that are not clicked by customers \cite{huang2007correcting}. 
Bear in mind, not clicking on an item does not necessarily indicate the customer is not interested in purchasing it. Customers may unconsciously skip certain items that might be interesting to them. Fig. \ref{fig:selection_bias_illustration} reveals that the exposure space $\mathcal{D}$  is a super set of the click space  $\mathcal{O}$. Selection bias occurs when conventional CVR models are trained in the click space, and the predictions are made in the entire exposure space (see Fig. \ref{fig:selection_bias_illustration}) \cite{ma_entire_2018}. Intuitively, data in the click space is drawn from the entire exposure space and is biased by the user self-selection. Therefore, the data distribution in the click space is systematically different from the one in the exposure space. This inherent discrepancy leads to data that is missing not at random (MNAR), and selection bias in the conventional CVR models \cite{de2014reducing,little2019statistical,little2002statistical, bareinboim2014recovering}.

In summary, we identify two practical issues that make CVR estimation quite challenging in industrial-level recommender systems:
\begin{itemize}
    \item \textbf{Selection bias}: The systematic difference between data distributions in training space $\mathcal{O}$  and inference space $\mathcal{D}$ biases conventional CVR models \cite{steck2010training,huang2007correcting,ai2018unbiased, bareinboim2014recovering}. This bias usually causes the performance degradation. 
    \item \textbf{Data sparsity}: This issue occurs since clicks are relatively rare events (we have a CTR of 5.2\% in the production dataset and 4\% in the public dataset). Conventional CVR models are typically trained only using data in the click space. Therefore, the number of training samples may not be sufficient for the large parameter space. In the experiments, we have 0.6 billion data samples vs. 5.3 billion parameters for production dataset, and 4.3 million samples vs. 2.6 billion parameters for public dataset (see Section \ref{section:dataset}) \cite{lee2012estimating,wang2018billion}. 
\end{itemize}

To simplify the debiasing task of CVR estimation, we assume the exposure space is the entire item space we are interested in (see Fig. \ref{fig:selection_bias_illustration}) \cite{ma_entire_2018}. Such a relaxation is also made based on the postulation that most items are exposed at least once.  Table \ref{table:data_stats} shows that our dataset contains 81.5 million items and 11.5 billion exposures,  i.e., each item is exposed, on  average, about 150 times.

To address the critical issues of selection bias and data sparsity in the CVR estimation, we take a causal perspective and develop causal methods in a multi-task learning framework. We propose two principled, efficient and highly effective CVR estimators, namely, \textbf{Multi}-task \textbf{I}nverse \textbf{P}ropensity \textbf{W}eighting estimator (\textbf{Multi-IPW}) and \textbf{Multi}-task \textbf{D}oubly \textbf{R}obust estimator (\textbf{Multi-DR}). Our methods are designed for unbiased CVR estimation. They also account for the data sparsity issue.

The main contributions of this paper are summarized as follows:
\begin{itemize}
    \item To the best of our knowledge, this is the first paper that combines IPW-based and DR-based methods with multi-task learning. From a causal perspective, we aim to tackle the well-recognized issues (i.e., selection bias and data sparsity) in CVR estimation in concert. 
    \item We highlight that the state-of-the-art CVR model, ESMM \cite{ma_entire_2018}, is biased. Different from existing works, our methods adjust for MNAR data, and deal with the selection bias in a principled way. Meanwhile, we give mathematical proofs that the proposed methods are theoretically unbiased. The empirical study shows our approaches outperform ESMM and several state-of-the-art causal models, and demonstrates the efficiency of our methods in real industrial setting.
\end{itemize}

\section{Related works}
In this section, we review several existing works that attempt to tackle the selection bias issue in recommender systems. Meanwhile, we summarize how our methods are different from  prior works.

Ma \textit{et al.} \cite{ma_entire_2018} proposed the Entire Space Multi-task Model (ESMM) to remedy selection bias and data sparsity issues in the conversion rate (CVR) estimation. ESMM is trained in the entire exposure space, and it formulates CVR task as two auxiliary tasks, i.e., click-through rate (CTR) and click-through \& conversion rate (CTCVR) estimations. However, we argue that ESMM is biased. The details of our argument are presented in Section \ref{section:esmm_biased}.

Causal inference offers a way to adapt for the data generation process when we attempt to restore the information from MNAR data \cite{little2002statistical}. Schnabel \textit{et al}. \cite{schnabel_recommendations_2016} proposed an IPW-based estimator for training and evaluating recommender systems from biased data. 
IPW-based models may still be biased if the propensities are not accurately estimated. Wang \textit{et al.} \cite{wang_doubly_2019} proposed a doubly robust (DR) joint learning approach for estimating item ratings that are MNAR. Doubly robust estimator combines the IPW-based methods with an imputation model that estimates the prediction error for the missing data. When the propensities are not accurately learned, DR estimator can still enjoy unbiasedness as long as its imputation model is accurate. However, the existing DR-based methods are not devised for CVR estimation, hence fail to account for the severe data sparsity issue that widely exists in the CVR estimation. In addition, such a joint learning approach is not efficient in industrial setting (see Fig. \ref{fig:radar_chart}).

To summarize, our approach differs from aforementioned methods in three aspects: 1) The problems are different. We developed our methods for CVR estimation in e-commerce system, while they focus on the rating prediction \cite{lu2018coevolutionary}. 2) The challenges are different. we design our models to address the selection bias and data sparsity issues, while they only consider the former (ESMM considers both). 3) The methods are different. we integrate multi-task framework with causal approaches. Specifically, We co-train propensity model, imputation model and prediction model simultaneously with deep neural networks, while they train these modules separately or alternatively, and usually with models such as linear regression or matrix factorization \cite{gopalan2014content,guo2017deepfm,rendle2010factorization,he2016fast}. We will further justify our design in Section 3 and report the performance improvement in Section 5.

 \section{Causal CVR Estimators with multi-task learning}
\label{section:industrial}

\subsection{Preliminary}
Let $\mathcal{U}=(u_1, u_2, ..., u_N)$ be a set of $N$ users and $\mathcal{I}=(i_1, i_2,...,i_M)$ be a set of $M$ items, $\mathcal{D}=\mathcal{U} \times \mathcal{I}$ be the user-item pairs, $\mathbf{R} \in \mathbb{R}^{N \times M}$ be the true conversion label matrix where each entry $r_{u,i} \in \{0,1\}$, 
and $\mathbf{\hat{R}} \in \mathbb{R}^{N \times M}$ be the predicted conversion score matrix where each entry $\hat{r}_{u,i} \in [0,1]$. Then, the \textit{Prediction inaccuracy} $\mathcal{P}$ over all user-item pairs can be formulated as follows,
\begin{equation}
    \mathcal{P} = \mathcal{P}(\mathbf{R}, \mathbf{\hat{R}})= \frac{1}{|\mathcal{D}|}\sum_{(u,i) \in \mathcal{D}}e(r_{u,i}, \hat{r}_{u,i}),
\label{eq:P}
\end{equation}
where $e(r_{u,i}, \hat{r}_{u,i})=-r_{u,i}\log(\hat{r}_{u,i})-(1-r_{u,i})\log(1-\hat{r}_{u,i})$. 
 
Let $\mathbf{O} \in \{0,1\}^{\mathcal{U} \times \mathcal{I}}$ 
be the \textit{indicator matrix} where each entry $o_{u,i}$ is an observation indicator: $o_{u,i} =1$ if a user $u$ clicks on item $i$, $o_{u,i} =0$ otherwise. Since the clicks are subjective to certain unobserved factors (e.g., users latent interests), such user self-selection process generates MNAR data \cite{rubin1976inference,enders2010applied}. Naive CVR estimators are trained only in the click space $\mathcal{O}=\{(u,i)|o_{u,i} =1, (u,i) \in \mathcal{D}$\}. Let $\mathbf{R}^{\text{obs}}$ and $\mathbf{R}^{\text{mis}}$ be the set of conversion labels that are present and absent in $\mathcal{D}$. We evaluate these naive CVR models by averaging the cross-entropy loss over the observed data \cite{schnabel_recommendations_2016,wang_doubly_2019},
\begin{equation}
\begin{aligned}
    \mathcal{E}^{\text{Naive}} &= \mathcal{E}(\mathbf{R}^{\text{obs}}, \mathbf{\hat{R}}) \\
    &= \frac{1}{|\mathcal{O}|}\sum_{(u,i) \in \mathcal{O}}e(r_{u,i}, \hat{r}_{u,i})\\
    &= \frac{1}{|\mathcal{O}|}\sum_{(u,i) \in \mathcal{D}}o_{u,i}e(r_{u,i}, \hat{r}_{u,i}),
\end{aligned}    
\end{equation}
where $|\mathcal{O}| = \sum_{(u,i) \in \mathcal{D}}o_{u,i}$.

We say a CVR estimator $\mathcal{M}$ is \textit{unbiased} when the expectation of the estimated prediction inaccuracy over $\mathbf{O}$ equals to the prediction inaccuracy $\mathcal{P}$, i.e.,  $\text{Bias}^\mathcal{M}=|\mathbb{E}_{\mathbf{O}}[\mathcal{E}^\mathcal{M}] - \mathcal{P}| = 0$, otherwise it is \textit{biased}.
If data is MNAR, $|\mathbb{E}_{\mathbf{O}}[\mathcal{E}_\mathcal{M}] - \mathcal{P}| \gg 0$ \cite{mohan2013graphical}.

\subsection{Is ESMM an Unbiased CVR Estimator?}
\label{section:esmm_biased}

\begin{figure}[t]
    \centering
    \includegraphics[width=65mm]{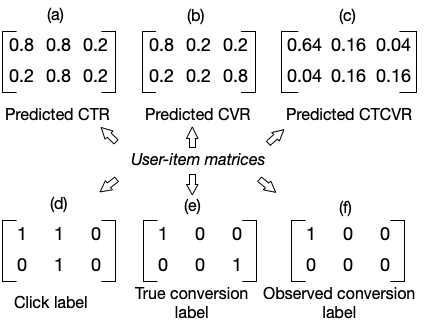}
    \caption{A toy example that demonstrates ESMM is biased.}
    \label{fig:esmm_bias}
\end{figure}

In this section, we demonstrate that ESMM, the state-of-the-art CVR estimator in practice, is essentially biased, though the author claim in the paper that the model eliminates the selection bias \cite{ma_entire_2018}. We formulate the estimation bias of ESMM, and prove it is not theoretically unbiased by giving a counter example. 

Let $e^{\text{CTR}}_{u,i},e^{\text{CVR}}_{u,i},e^{\text{CTCVR}}_{u,i}, (u,i) \in \mathcal{D}$, be the cross-entropy losses of CTR, CVR, and CTCVR tasks. Then we have,
\begin{equation}
\begin{aligned}
    &\text{Bias}^{\text{ESMM}}
    =|\mathbb{E}_{\mathbf{O}}[\mathcal{E}^{\text{ESMM}}] - \mathcal{P}|\\
    &= \biggr|\mathbb{E}_{\mathbf{O}}\biggr[\frac{1}{|\mathcal{D}|}\sum_{(u,i) \in \mathcal{D}}(e^{\text{CTR}}_{u,i} + e^{\text{CTCVR}}_{u,i})\biggr] - \frac{1}{|\mathcal{D}|}\sum_{(u,i) \in \mathcal{D}}e_{u,i}^{\text{CVR}} \biggr|\\
    &= \biggr|\frac{1}{|\mathcal{D}|}\sum_{(u,i) \in \mathcal{D}}(e^{\text{CTR}}_{u,i} + e^{\text{CTCVR}}_{u,i}) - \frac{1}{|\mathcal{D}|}\sum_{(u,i) \in \mathcal{D}}e_{u,i}^{\text{CVR}}\biggr|\\
    &= \frac{1}{|\mathcal{D}|}\biggr|\sum_{(u,i) \in \mathcal{D}}(e^{\text{CTR}}_{u,i} + e^{\text{CTCVR}}_{u,i} - e_{u,i}^{\text{CVR}})\biggr|.
\end{aligned}
\label{eq:esmm_biased}
\end{equation}
We can easily verify that $\text{Bias}^{\text{ESMM}} > 0$ using the counter example in Fig. \ref{fig:esmm_bias}. Note that to be theoretically unbiased, ESMM should satisfy $|\mathbb{E}_{\mathcal{D}}[\mathcal{E}^{\text{ESMM}}] - \mathcal{P}|=0,\forall \mathcal{D}$.  Therefore, we conclude that ESMM cannot ensure unbiased CVR estimation.

\subsection{A causal perspective to unbiased CVR estimation}

Recall that selection bias in CVR estimation comes from the fact that models are trained over the click space $\mathcal{O}$, whereas the predictions are made over the exposure space $\mathcal{D}$ (See Fig. \ref{fig:selection_bias_illustration}). 
Ideally, we can remove the selection bias by building our CVR estimators using a dataset where the conversion labels of all the items are known. In the language of causal inference, it is equivalent to training CVR estimators on a "do dataset", where causal intervention is applied on click event during the data generation process. Specifically, users are "forced" to click on every item in the exposure space $\mathcal{D}$ and further make their purchase decisions. 
Note that the training space is the same as the inference space in the "do dataset". Hence, the selection bias is eliminated. Intuitively, we can also understand how causal intervention removes the bias in  Fig. \ref{fig:causal_graph}. $Z$ denotes the self-selection factors that affect both click events and conversion events.  For example, $Z$ can be the purchase interest  or price discount that customers consider in online-shopping. In causal inference, we refer $Z$ as "confounder(s)" that biases the CVR  inference \cite{wang2018deconfounded}. Once the causal intervention is applied on the click event (i.e., users are  forced to click on all exposed items),  $Z$ has no control over user click behaviors. It means that we have successfully removed the confounder $Z$ which biases our CVR estimation \cite{pearl2009causal,pearl2018book,wang2018deconfounded,pearl2000causality,morgan2015counterfactuals}.

\begin{figure}[t]
    \centering
    \includegraphics[width=80mm]{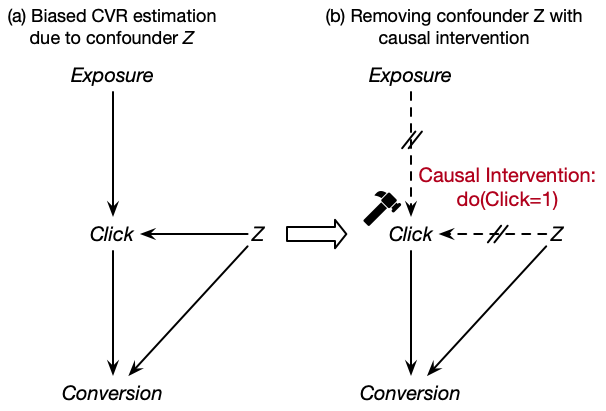}
    \caption{This causal graph formulate CVR estimation as a causal problem. \cite{pearl2018book} In (a), $Z$ is a confounder that affects both clicks and purchases, and it biases the inference. In (b), we apply intervention on click events (do(Click)=1). Once users are "forced" to click on each exposed item, $Z$ has no control over user click behaviors. Note the absence of the arrow from $Z$ to \textit{Click}. Hence, we have successfully removed the confounder $Z$, and the selection bias \cite{pearl2009causal,pearl2018book,wang2018deconfounded,pearl2000causality,morgan2015counterfactuals}.}
    \label{fig:causal_graph}
\end{figure}

Apparently, this "do dataset" generated in this imaginary intervention experiment cannot be obtained in reality. Now the challenge is how to train our CVR estimators on the observed dataset $\mathcal{O}$ as if we do on the "do dataset". In the following sections, we will discuss two estimators that can achieve unbiased CVR prediction with the data that are MNAR.

As for ESMM and its variants, they are directly trained over the entire exposure space, which is different from the "do dataset". In "do dataset", we apply causal intervention on the click events during the data generation process to obtain the true conversion label matrix (see matrix \textit{e} in Fig. \ref{fig:esmm_bias}). Without such intervention, the conversion labels in the dataset are still missing not at random. For example, in Fig. \ref{fig:esmm_bias}, matrix \textit{f} is obtained without causal intervention. The conversion label of the last entry is observed as 0 because its click label is 0, whereas the true conversion label is 1 in the matrix \textit{e}. We can also understand the weakness of ESMM and its variant from using causal graph (see Fig. \ref{fig:causal_graph}), CVR estimators cannot achieve unbiased CVR estimation as long as they fail to remove the confounder $Z$. 

\subsection{Multi-task Learning Module}
\label{subsection:mtl}

To address the data sparsity issue, we adopt the philosophy of multi-task learning and introduce an auxiliary CTR task \cite{wang2019multi}. The multi-task learning module exploits the typical sequential events in e-commerce recommender system, i.e., "exposure $\rightarrow$ click $\rightarrow$ conversion", and chains the main CVR task with the auxiliary CTR task. The amount of training data in CTR task is generally larger than that in CVR task by $1 \sim 2$ order of magnitudes (see Table \ref{table:data_stats}), thus CTR task trains the large volume of model parameters more sufficiently. Besides, the feature representation learned in the CTR task is shared with the CVR task, which makes the CVR model benefit from the extra information via parameter sharing. Hence, the data sparsity issue is remedied \cite{hadash2018rank, ma_entire_2018, ni2018perceive}. 

Meanwhile, multi-task learning is also perceived as being cost-effective in training phase \cite{ma_entire_2018}. Specifically, multi-task learning co-trains multiple tasks simultaneously as if they were one task. This mechanism can potentially reduce storage space for saving duplicate copies of embedding matrix. In addition, the parallel training mechanism generally reduces the training time by large. The Multi-IPW and Multi-DR models inherit aforementioned merits by incorporating a multi-task learning module.

\subsection{Multi-task Inverse Propensity Weighting CVR Estimator}
\label{section:N_IPW}
Let the marginal probability $P(o_{u,i}=1)$ denote the propensity score, $p_{u,i}$, of observing an entry in $\mathbf{R}$. In practice, the real $p_{u,i}$ can not be obtained directly. Instead, we estimate the real propensity with $\hat{p}_{u,i}$. The IPW-based estimator uses $\hat{p}_{u,i}$ to inversely weight prediction loss \cite{little2002statistical, imbens2015causal, schnabel_recommendations_2016,hirano2003efficient}, 
\begin{equation}
    \begin{aligned}
        \mathcal{E}^{\text{IPW}}
        &= \frac{1}{|\mathcal{D}|}\sum_{(u,i) \in \mathcal{D}}\frac{o_{u,i}e(r_{u,i}, \hat{r}_{u,i})}{\hat{p}_{u,i}}.
    \end{aligned}
    \label{eq:IPW}
\end{equation}

Typically, $\hat{p}_{u,i}$ is learned via an independent logistic regression model \cite{austin2011introduction}. In Fig. \ref{fig:model_overview}, we propose the Multi-IPW model which leverages the multi-task learning framework to simultaneously learn the propensity score (i.e., CTR in Multi-IPW) with CVR. Hence, the loss function of Multi-IPW estimator can be written as follows,
\begin{equation}
    \begin{aligned}
        & \mathcal{E}^{\text{Multi-IPW}}(\mathcal{X}_{\mathcal{O}};\theta_{\text{CTR}}, \theta_{\text{CVR}}, \Phi) \\
        &= \frac{1}{|\mathcal{D}|}\sum_{(u,i) \in \mathcal{D}}\frac{o_{u,i}e(r_{u,i}, \hat{r}_{u,i}(\vec{x}_{u,i};\theta_{\text{CVR}},\Phi))}{\hat{p}_{u,i}(\vec{x}_{u,i};\theta_{\text{CTR}},\Phi)},
    \end{aligned}
\end{equation}
where $\Phi$ represents the shared embedding parameters. $\theta_{\text{CVR}}$ and $\theta_{\text{CTR}}$ are neural network parameters of CVR task and CTR task, respectively. $e(r_{u,i}, \hat{r}_{u,i})$, parameterized by $\theta_{\text{CVR}}$ and $\Phi$, is the cross entropy loss of true CVR label $r_{u,i}$ and predicted CVR score $\hat{r}_{u,i}$. We use the predicted CTR score $\hat{p}_{u,i}$, parameterized by $\theta_{\text{CTR}}$ and $\Phi$, as propensities. $\mathcal{D}$ denotes all the data in the exposure space. $\mathcal{X}_\mathcal{O}$ is the input feature vectors in $\mathcal{O}$.

we formally derive the bias of Multi-IPW and prove it is unbiased given the propensities are accurately estimated.

\begin{figure*}[t]
    \centering
    \includegraphics[width=130mm]{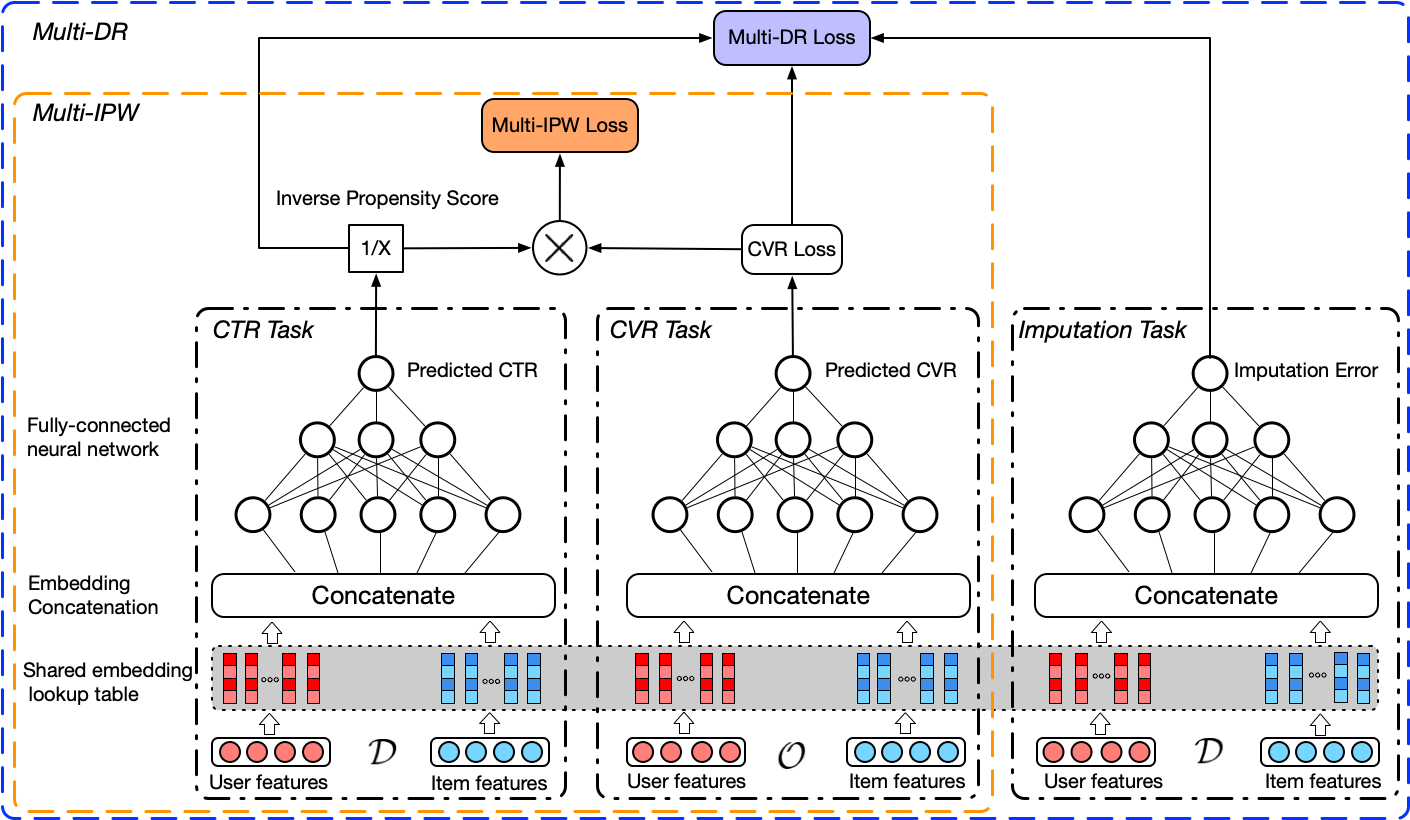}
    \caption{Multi-Inverse Propensity Weighting estimator and Multi-Doubly Robust estimator. The Multi-DR estimator augments Multi-IPW with an imputation model. We use predicted CTR as propensity scores in the Multi-IPW estimator. In the multi-task learning module, the CTR task, CVR task, and Imputation task are chained together via parameter sharing.}
    \label{fig:model_overview}
\end{figure*}

\begin{theorem}
Given the true propensities $\mathbf{P}$  and the true conversion label matrix $\mathbf{R}$, the  Multi-IPW CVR estimator gives unbiased CVR prediction when estimated propensity scores are accurate $\hat{p}_{u,i}=p_{u,i}$,
\begin{align}
    |\mathbb{E}_{\mathbf{O}}[\mathcal{E}^{\text{Multi-IPW}}] - \mathcal{P}|=0.
\end{align}
\end{theorem}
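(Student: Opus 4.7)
The plan is a direct, one-line calculation: expand $\mathcal{E}^{\text{Multi-IPW}}$, exchange expectation and summation by linearity, and use the fact that, under the theorem's hypothesis, the only stochasticity left in each summand sits in the click indicator $o_{u,i}$.

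Concretely, I would first treat $r_{u,i}$, $\hat{r}_{u,i}$, $p_{u,i}$, and $\hat{p}_{u,i}$ as deterministic, since the theorem conditions on the ``true'' conversion labels and propensities and is a statement about the evaluation functional at a fixed prediction $\hat{\mathbf{R}}$. Pushing $\mathbb{E}_{\mathbf{O}}$ through the outer sum yields $\frac{1}{|\mathcal{D}|} \sum_{(u,i) \in \mathcal{D}} \frac{e(r_{u,i}, \hat{r}_{u,i})}{\hat{p}_{u,i}} \cdot \mathbb{E}_{\mathbf{O}}[o_{u,i}]$. Next I would invoke the definition from Section 3.5, namely $\mathbb{E}_{\mathbf{O}}[o_{u,i}] = P(o_{u,i}=1) = p_{u,i}$. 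Under the assumption $\hat{p}_{u,i} = p_{u,i}$, the propensity factor cancels and the expression reduces to $\frac{1}{|\mathcal{D}|} \sum_{(u,i) \in \mathcal{D}} e(r_{u,i}, \hat{r}_{u,i})$, which by equation (1) is precisely $\mathcal{P}$. Subtracting and taking absolute value gives $|\mathbb{E}_{\mathbf{O}}[\mathcal{E}^{\text{Multi-IPW}}] - \mathcal{P}| = 0$.

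I do not expect a real obstacle, as this is the textbook IPW unbiasedness argument. The two points worth flagging are (i) a strict-positivity/overlap assumption $p_{u,i} > 0$ for all $(u,i) \in \mathcal{D}$, required for the IPW ratio to be well-defined, and (ii) the separability of randomness between $\mathbf{O}$ and $\hat{r}_{u,i}$, implicit in viewing the loss as a functional of a fixed predictor. If I wanted to make the proof self-contained with respect to the law of $\mathbf{O}$, I would observe that the paper's preliminary section tacitly models the click indicators as Bernoulli random variables with means $p_{u,i}$, which suffices for the single linearity-of-expectation step above.
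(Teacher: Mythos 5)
Your argument is exactly the paper's proof: push $\mathbb{E}_{\mathbf{O}}$ inside the sum, replace $\mathbb{E}_{\mathbf{O}}[o_{u,i}]$ by $p_{u,i}$, cancel against $\hat{p}_{u,i}=p_{u,i}$, and recognize the remaining sum as $\mathcal{P}$ from Eq.~(\ref{eq:P}). Your added remarks on positivity of the propensities and on treating $\hat{r}_{u,i}$ as fixed are sensible clarifications of assumptions the paper leaves implicit, but the route is the same.
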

\begin{proof}
\begin{equation}
\begin{aligned}
    &|\mathbb{E}_{\mathbf{O}}[\mathcal{E}^{\text{Multi-IPW}}] - \mathcal{P}\\ &= \biggr|\frac{1}{|\mathcal{D}|}\sum_{(u,i) \in \mathcal{D}}\mathbb{E_{\mathbf{O}}}\biggr[\frac{o_{u,i}e(r_{u,i}, \hat{r}_{u,i})}{\hat{p}_{u,i}}\biggr] - \mathcal{P}\biggr|\\
    &= \biggr|\frac{1}{|\mathcal{D}|}\sum_{(u,i) \in \mathcal{D}}\frac{p_{u,i}e(r_{u,i}, \hat{r}_{u,i})}{\hat{p}_{u,i}} - \mathcal{P}\biggr|\\
    &= \biggr|\frac{1}{|\mathcal{D}|}\sum_{(u,i) \in \mathcal{D}}e(r_{u,i}, \hat{r}_{u,i}) - \mathcal{P}\biggr|= 0.
\end{aligned}    
\end{equation}
\end{proof}
Multi-IPW estimator inherits the merits of multi-task learning: 1) better CVR prediction due to parameter sharing, and 2) reduced training time and parameter storage.
 These are clear advantages over conventional IPW-based estimators.

\begin{algorithm}
\SetAlgoLined
\DontPrintSemicolon
\SetInd{0.2em}{1.3em}
\KwIn{ Observed conversion labels $\mathcal{R}^{obs}$ and user-item features $\mathcal{X}_{\mathcal{O}} = \{\vec{x}_{u,i}\}^{\mathcal{O}}, u,i \in \mathcal{O}$}
 \While{stopping criteria is not satisfied}{
  Sample a batch $\mathcal{B}$ of user-item pairs $\{\vec{x}_{u,i}\}_{B}$ from $\mathcal{O}$\;
    \SetKwBlock{Fna}{\textnormal{Co-train CTR task and CVR task }}{}
    \Fna{
        Update $\theta_{CTR}, \Phi$ by descending along the gradients $\triangledown_{\theta_{CTR}}\mathcal{E}^{Multi-IPW},\triangledown_{\Phi}\mathcal{E}^{Multi-IPW}$\;
      Update $\theta_{CVR}, \Phi$ by descending along the gradients $\triangledown_{\theta_{CVR}}\mathcal{E}^{Multi-IPW},\triangledown_{\Phi}\mathcal{E}^{Multi-IPW}$\;
    } 

}
 \caption{\textbf{Multi-Inverse Propensity  Weighting}}
\end{algorithm}

\subsection{Multi-task Doubly Robust CVR Estimator}
\label{section:dr}
The IPW-based models are unbiased contingent on accurately estimated propensities (i.e., $\hat{p}_{u,i}=p_{u,i}$). In practice, this condition is too restricted. To address this issue, doubly robust estimator is introduced by previous works \cite{dudik2011doubly,wang_doubly_2019,vermeulen2015bias,farajtabar2018more}. 

Wang \textit{et al.} \cite{wang_doubly_2019} proposed a joint learning approach for training a doubly robust estimator, and introduced two models: 1) a prediction model $\hat{r}_{u,i} = f_{\theta}(\vec{x}_{u,i})$, and 2) an imputation model $\hat{e}_{u,i} = g_{\phi}(\vec{x}_{u,i})$. The prediction model, parameterized by $\theta$, aims to predict the ratings, and its performance is evaluated by $e_{u,i}=e(r_{u,i},\hat{r}_{u,i}),  (u,i) \in \mathcal{D}$. The imputation model, parameterized by $\phi$, aims to estimate the prediction error $e_{u,i}$ with $\hat{e}_{u,i}$. Its performance is assessed by $\delta_{u,i}=e_{u,i}-\hat{e}_{u,i}$. The feature vector $\vec{x}_{u,i}$ encodes all the information about the user $u$ and the item $i$, $(u,i) \in \mathcal{D}$. Then, we can formulate the loss of doubly robust estimator as,
\begin{equation}
    \mathcal{E}^{\text{DR}} = \frac{1}{|\mathcal{D}|}\sum_{(u,i) \in \mathcal{D}}\biggr(\hat{e}_{u,i}+\frac{o_{u,i}\delta_{u,i}}{\hat{p}_{u,i}}\biggr),
\end{equation}
Similarly, we propose the Multi-DR estimator which augments Multi-IPW estimator by including an imputation model estimating the prediction error $e_{u,i}$. Multi-DR optimizes the following loss,
\begin{equation}
\begin{aligned}
    & \mathcal{E}^{\text{Multi-DR}} (\mathcal{X};\theta_{\text{CTR}}, \theta_{\text{CVR}}, \theta_{\text{Imp}}, \Phi) \\
    &= \frac{1}{|\mathcal{D}|}\sum_{(u,i) \in \mathcal{D}}\biggr(\hat{e}_{u,i}(\vec{x}_{u,i};\theta_{\text{Imp}},\Phi)+\frac{o_{u,i}\delta_{u,i}(\vec{x}_{u,i};\theta_{\text{CVR}},\theta_{\text{Imp}},\Phi)}{\hat{p}_{u,i}(\vec{x}_{u,i};\theta_{\text{CTR}},\Phi)}\biggr),
\end{aligned}    
\end{equation}
where $\Phi$ represents the shared embedding parameters among CTR task, CVR task, and imputation task. $\theta_{\text{CTR}}$, $\theta_{\text{CVR}}$, $\theta_{\text{Imp}}$ are neural network parameters of CTR task, CVR task, imputation task, respectively. $\hat{p}_{u,i}$ is the propensity (i.e., predicted CTR score) given by CTR task. Estimated prediction error $\hat{e}$, parameterized by $\theta_{\text{Imp}}$ and $\Phi$, is given by imputation task. $ \delta_{u,i} = e_{u,i} - \hat{e}_{u,i}$ is the error deviation.

We can formally derive the bias of Multi-DR and prove it is unbiased if either true propensity scores or true prediction errors are accurately estimated (i.e., $\Delta_{u,i}=0$ or $\delta_{u,i}=0$).

\begin{theorem}
Given the true propensities $\mathbf{P}$ and the true conversion label matrix $\mathbf{R}$, the  Multi-DR CVR estimator gives unbiased CVR prediction when either estimated propensity scores are accurate $\Delta_{u,i} = \frac{p_{u,i}-\hat{p}_{u,i}}{\hat{p}_{u,i}}= 0$ or the estimated prediction errors are accurate $\delta_{u,i} =e_{u,i}-\hat{e}_{u,i}=0$,
\begin{align}
    &|\mathbb{E}_{\mathbf{O}}[\mathcal{E}^{\text{Multi-DR}}] - \mathcal{P}|=0.
\end{align}
\end{theorem}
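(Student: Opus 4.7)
The plan is to mirror the proof of Theorem~1 and compute $\mathbb{E}_{\mathbf{O}}[\mathcal{E}^{\text{Multi-DR}}]$ directly, then show its difference from $\mathcal{P}$ factors as a product of the two error quantities the theorem conditions on. First I would pull the expectation inside the sum by linearity. Inside the summand, $\hat{e}_{u,i}$, $\hat{p}_{u,i}$, and $\delta_{u,i}$ are deterministic functions of $\vec{x}_{u,i}$ (under the fixed parameters $\theta_{\text{CTR}}, \theta_{\text{CVR}}, \theta_{\text{Imp}}, \Phi$), so the only random quantity is $o_{u,i}$, with $\mathbb{E}_{\mathbf{O}}[o_{u,i}] = p_{u,i}$. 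This yields
\begin{equation*}
\mathbb{E}_{\mathbf{O}}[\mathcal{E}^{\text{Multi-DR}}] \;=\; \frac{1}{|\mathcal{D}|}\sum_{(u,i)\in\mathcal{D}}\left(\hat{e}_{u,i} + \frac{p_{u,i}\,\delta_{u,i}}{\hat{p}_{u,i}}\right).
\end{equation*}

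Next I would introduce the propensity-error quantity $\Delta_{u,i} = (p_{u,i}-\hat{p}_{u,i})/\hat{p}_{u,i}$, which gives the clean identity $p_{u,i}/\hat{p}_{u,i} = 1 + \Delta_{u,i}$. Substituting this and then expanding $\delta_{u,i} = e_{u,i}-\hat{e}_{u,i}$, the $\hat{e}_{u,i}$ outside the fraction cancels against the $-\hat{e}_{u,i}$ term hidden inside $\delta_{u,i}$, leaving
\begin{equation*}
\mathbb{E}_{\mathbf{O}}[\mathcal{E}^{\text{Multi-DR}}] \;=\; \frac{1}{|\mathcal{D}|}\sum_{(u,i)\in\mathcal{D}} e_{u,i} \;+\; \frac{1}{|\mathcal{D}|}\sum_{(u,i)\in\mathcal{D}} \Delta_{u,i}\,\delta_{u,i} \;=\; \mathcal{P} \;+\; \frac{1}{|\mathcal{D}|}\sum_{(u,i)\in\mathcal{D}} \Delta_{u,i}\,\delta_{u,i}.
\end{equation*}
Subtracting $\mathcal{P}$ and taking absolute values therefore gives the bias formula
\begin{equation*}
|\mathbb{E}_{\mathbf{O}}[\mathcal{E}^{\text{Multi-DR}}] - \mathcal{P}| \;=\; \left|\frac{1}{|\mathcal{D}|}\sum_{(u,i)\in\mathcal{D}} \Delta_{u,i}\,\delta_{u,i}\right|,
\end{equation*}
which is identically zero whenever either $\Delta_{u,i}\equiv 0$ (accurate propensities) or $\delta_{u,i}\equiv 0$ (accurate imputed errors) holds across $\mathcal{D}$. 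This is exactly the ``double robustness'' property claimed.

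There is no deep obstacle here; the proof is essentially a one-line expectation followed by algebraic rearrangement, so the only thing to be careful about is the bookkeeping. The loss mixes an $\mathbf{O}$-independent term ($\hat{e}_{u,i}$) with an $\mathbf{O}$-dependent term ($o_{u,i}\delta_{u,i}/\hat{p}_{u,i}$), and the double-robust cancellation only falls out cleanly once one rewrites $p_{u,i}/\hat{p}_{u,i}$ as $1+\Delta_{u,i}$, so I would emphasize that step explicitly. It is also worth noting that, as in Theorem~1, the argument treats the estimator in its population form over $\mathcal{D}$ and so says nothing about finite-sample variance; the ``doubly robust'' conclusion is strictly about the bias of the expected loss.
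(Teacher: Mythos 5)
Your proof is correct and follows essentially the same route as the paper's: pull the expectation inside using $\mathbb{E}_{\mathbf{O}}[o_{u,i}]=p_{u,i}$, then rearrange algebraically so that the difference from $\mathcal{P}$ collapses to $\frac{1}{|\mathcal{D}|}\bigl|\sum_{(u,i)\in\mathcal{D}}\Delta_{u,i}\delta_{u,i}\bigr|$, which vanishes under either accuracy condition. The only cosmetic difference is that you route the cancellation through the identity $p_{u,i}/\hat{p}_{u,i}=1+\Delta_{u,i}$, whereas the paper combines $\hat{e}_{u,i}-e_{u,i}=-\delta_{u,i}$ with the weighted term directly to reach the same final expression.
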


\begin{proof}
\begin{equation}
\begin{aligned}
    &|\mathbb{E}_{\mathbf{O}}[\mathcal{E}^{\text{Multi-DR}}]-\mathcal{P}|\\&=\frac{1}{|\mathcal{D}|}\biggr|\sum_{(u,i) \in \mathcal{D}}\biggr(\hat{e}_{u,i}+\mathbb{E}_{\mathbf{O}}\biggr[\frac{o_{u,i\delta_{u,i}}}{\hat{p}_{u,i}}\biggr]\biggr) - \mathcal{P}\biggr| \\
    &= \frac{1}{|\mathcal{D}|}\biggr|\sum_{(u,i) \in \mathcal{D}}\biggr(\hat{e}_{u,i}+\frac{p_{u,i}\delta_{u,i}}{\hat{p}_{u,i}}\biggr) - \mathcal{P}\biggr| \\
    &= \frac{1}{|\mathcal{D}|}\biggr|\sum_{(u,i) \in \mathcal{D}}\frac{(p_{u,i}-\hat{p}_{u,i})\delta_{u,i}}{\hat{p}_{u,i}}\biggr|\\
    &= \frac{1}{|\mathcal{D}|}\biggr|\sum_{(u,i) \in \mathcal{D}}\Delta_{u,i}\delta_{u,i}\biggr| = 0.
\end{aligned}
\end{equation}
\end{proof}

\begin{algorithm}
\SetAlgoLined
\DontPrintSemicolon
\KwIn{Observed conversion labels $\mathcal{R}^{obs}$ and user-item features $\mathcal{X}_{\mathcal{D}} = \{\vec{x}_{u,i}\}^{\mathcal{D}}, u,i \in \mathcal{D}$}
 \While{stopping criteria is not satisfied}{
  Sample a batch $\mathcal{B}$ of user-item pairs $\{\vec{x}_{u,i}\}_{B}$ from $\mathcal{D}$\;
  
    \SetKwBlock{Fnb}{\textnormal{Co-train CTR task, CVR task, and Imputation task }}{}
        \Fnb{
      Update $\theta_{CTR}, \Phi$ by descending along the gradients $\triangledown_{\theta_{CTR}}\mathcal{E}^{Multi-DR},\triangledown_{\Phi}\mathcal{E}^{Multi-DR}$\;
      Update $\theta_{CVR}, \Phi$ by descending along the gradients $\triangledown_{\theta_{CVR}}\mathcal{E}^{Multi-DR},\triangledown_{\Phi}\mathcal{E}^{Multi-DR}$\;
      Update $\theta_{Imp}, \Phi$ by descending along the gradients $\triangledown_{\theta_{Imp}}\mathcal{E}^{Multi-DR},\triangledown_{\Phi}\mathcal{E}^{Multi-DR}$\;
}
 }
 \caption{\textbf{Multi-Doubly Robust}}
\end{algorithm}

\section{Experimentation}
In this section, we evaluate the performance of the proposed models with a public dataset and a large-scale production dataset collected from Mobile Taobao, the leading e-commerce platform in China. The experiments are intended to answer the following questions:
\begin{itemize}
    \item \textbf{Q1}: Do our proposed approaches outperform other state-of-art CVR estimation methods?
    \item \textbf{Q2}: Are our proposed models more efficient in industrial setting than other baseline models?
    \item \textbf{Q3}: How is the performance of our proposed models affected by hyper-parameters?
\end{itemize}

\subsection{Datasets}
\label{section:dataset}
\subsubsection*{Ali-CCP \footnote{\url{https://tianchi.aliyun.com/dataset/dataDetail?dataId=408}}\cite{ma_entire_2018}} Alibaba Click and Conversion Prediction (Ali-CCP) dataset is collected from real-world traffic logs of the recommender systems in the Taobao platform. See the statistics in Table \ref{table:data_stats}.

\subsubsection*{Production sets} This industrial production dataset is collected from the Mobile Taobao e-commerce platform.
It contains 3-week transactional data (see Table \ref{table:data_stats}). Our production dataset includes 109 features, which are primarily categorized into: 1) user features, 2) item features, and 3) combination features. We further divide this  dataset into 4 subsets: Set A,  Set B, Set C, and Set D, which contain the first two days ($5\%$), the first five days ($20\%$), the first twelve days ($50\%$), and the 3-week of data ($100\%$), respectively. We use the data of the last day in each set as testing set and the remaining data as training set.

\begin{table}[ht]
\centering
  \caption{Statistics of experimental datasets}
  \label{table:data_stats}
  \begin{tabular}{llllcl}
    \toprule
    Dataset & \# Exposure & \# Click & \# Conversion & \# user & \# item\\
    \midrule
    Ali-CCP & 84M & 3.4M & 18k & 0.4M & 4.3M\\
    Set A & 1.1B & 54.5M & 0.6M & - & 22.5M\\
    Set B & 2.7B & 0.2B & 1.9M & - & 39.1M\\
    Set C & 6.0B & 0.4B & 4.3M & - & 62.6M\\
    Set D & 11.5B & 0.6B & 8.3M & - & 81.5M\\

  \bottomrule
\end{tabular}
\end{table}



\subsection{Baseline models}
We compare Multi-IPW model and Multi-DR model with the following baselines. Note that some baselines are causal estimators which we modify to predict the unbiased CVR, and others models are existing non-causal estimators designed for CVR predictions.

\subsubsection{Non-causal estimators}
\begin{itemize}
    \item \textbf{Base} is a naive post-click CVR model, which is a Multi-layer Perceptron (See the CVR task in Figure \ref{fig:model_overview}). Note that this is essentially an independent MLP model which takes the feature embeddings as input and predicts the CVR. The base model is trained in the click space.
    \item \textbf{Oversampling} \cite{weiss2004mining,2019arXiv191007892Z} deals with the class imbalance issue by duplicating the minority data samples (conversion=1) in training set with an oversampling rate k. In our experiment, we set k = 5. The oversampling model is trained in the click space.
    \item \textbf{ESMM} \cite{ma_entire_2018} utilizes multi-task learning methods and reduces the CVR estimation into two auxiliary tasks, i.e., CTR task and CTCVR task. ESMM is trained in the entire exposure space, and deemed as the state-of-the-art CVR estimation model in real practice.
    \item \textbf{Naive Imputation} takes all the unclicked data as negative samples. Hence, it is trained in the entire exposure space. 
\end{itemize}

\subsubsection{Causal estimators}
\begin{itemize}
    \item \textbf{Naive IPW}\cite{schnabel_recommendations_2016} is a naive IPW estimator. Note that it is not specifically designed for CVR estimation task as CVR prediction has its intrinsic issues. For example, it cannot deal with the data sparsity issue that inherently exists in CVR task.
    \item \textbf{Joint Learning DR} \cite{wang_doubly_2019}  is devised to learn from ratings that are missing not at random. In this experiment, we tailor Joint Learning DR for the CVR estimation. Similarly, Joint learning DR handles data sparsity issue poorly. 
    \item \textbf{Heuristic DR} is designed as a baseline for Multi-DR. It assumes that the unclicked items are negative samples with probability $1-\eta$, where $\eta$ is smoothing rate and it denotes the probability of having a positive label. In the experiments, we explore $\eta$ in $\{0.0005, 0.001, 0.002, 0.005, 0.01\}$ and report the best performance.
\end{itemize}

\subsection{Metrics}
In CVR prediction task, ROC AUC is a widely used metric \cite{fawcett2006introduction}. One interpretation of AUC in the context of ranking system is that it denotes the probability of ranking a random positive sample higher than a negative sample. Meanwhile, we also adopt Group AUC (GAUC) \cite{zhou2018deep}. GAUC extends AUC by calculating the weighted average of AUC grouped by page views or users,
\begin{equation}
    \text{GAUC}=\frac{\sum_{i \in U}w_i \times \text{AUC}_i}{\sum_{i \in U}w_i},
\end{equation}
where $w_i$ is exposures. GAUC is commonly recognized as a more indicative metric in real practice \cite{zhou2018deep}. In the public dataset, models are only assessed with AUC as the dataset is missing the information for computing GAUC.

\subsection{Unbiased Evaluation}
In this work, we use CTCVR-AUC/GAUC to evaluate the unbiasedness of CVR estimators \cite{ma_entire_2018}. We need to point out that testing with an unbiased dataset or randomization is generally a golden standard for unbiasedness assessment \cite{fisher1956statistical,whitehead1991general}. However, the unbiased training/testing dataset for CVR estimation is rather unobtainable in real practice.The real-world system can randomly expose items to users and generate unbiased evaluation sets for CTR estimation. But they cannot force users to randomly click on items to generate unbiased data for CVR estimations. This limitation may be further investigated in the future work.

\subsection{Experiments setup}
\subsubsection{Ali-CCP experiment}
The experiment setup on Ali-CCP mostly follows the prior work \cite{ma_entire_2018}. We set the dimension of all embedding vectors to be 18. The architecture of all these multi-layer perceptrons (MLP) in multi-task learning module are identical as $512 \times 256 \times 128 \times 32 \times 2$.  The optimizer is Adam with a learning rate $lr=0.0002$, and batch size is set to $|batch|=1024$. 

\begin{table*}[ht]
\caption{Results of comparison study on Production datasets. The best scores are bold-faced in each column. Note that this table has two sections, AUC scores and GAUC scores. The rows that contain the models proposed in this paper are highlighted in color grey. }
\begin{tabular}{lllllllll}
\hline
\multicolumn{1}{l|}{}                                  & \multicolumn{2}{l}{Set A (1.1B)}                                & \multicolumn{2}{l}{Set B (2.7B)}                               & \multicolumn{2}{l}{Set C (6.0B)}                               & \multicolumn{2}{l}{Set D (11.5B)}                              \\ \cline{2-9} 
\multicolumn{1}{l|}{\multirow{-2}{*}{Model}}           & CVR                                           & CTCVR          & CVR                                           & CTCVR          & CVR                                           & CTCVR          & CVR                                           & CTCVR          \\ \hline
\multicolumn{9}{c}{AUC score}                                                                                                                                                                                                                                                                                       \\ \hline
\multicolumn{1}{l|}{Base}                              & 78.24                                         & 73.12          & 78.67                                         & 73.86          & 79.62                                         & 74.70          & 81.66                                         & 76.28          \\
\multicolumn{1}{l|}{Oversampling\cite{weiss2004mining}}                      & 78.63                                         & 73.53          & 78.72                                         & 74.09          & 79.69                                         & 74.82          & 81.77                                         & 76.30          \\
\multicolumn{1}{l|}{ESMM\cite{ma_entire_2018}}                              & 79.29                                         & 73.86          & 79.74                                         & 74.33          & 80.11                                         & 74.97          & 82.17                                         & 76.55          \\
\multicolumn{1}{l|}{Naive Imputation}              & 78.12                                         & 73.21          & 78.44                                         & 73.50          & 79.32                                         & 73.81          & 81.56                                         & 76.39          \\
\multicolumn{1}{l|}{Naive IPW\cite{schnabel_recommendations_2016}}                         & 79.23                                         & 73.82          & 79.73                                         & 74.34          & 80.14                                         & 74.92          & 82.13                                         & 76.45          \\
\multicolumn{1}{l|}{Heuristic DR}                      & 78.45                                         & 73.45          & 78.84                                         & 73.99          & 79.52                                         & 74.18          & 81.74                                         & 76.40          \\
\multicolumn{1}{l|}{Joint Learning DR\cite{wang_doubly_2019}}                 & 79.09                                         & 73.67          & 79.53                                         & 74.51          & 80.01                                         & 74.90          & 82.09                                         & 76.61          \\ \hline
\rowcolor[HTML]{C0C0C0} 
\multicolumn{1}{l|}{\cellcolor[HTML]{C0C0C0}Multi-IPW} & 79.51                                         & 73.99          & \textbf{79.85}                                & 74.81          & 80.21                                         & 75.01          & 82.57                                         & 76.89          \\
\rowcolor[HTML]{C0C0C0} 
\multicolumn{1}{l|}{\cellcolor[HTML]{C0C0C0}Multi-DR}  & \textbf{79.72}                                & \textbf{74.45} & 79.80                                         & \textbf{74.91} & \textbf{80.50}                                & \textbf{75.39} & \textbf{82.72}                                & \textbf{77.23} \\ \hline
\multicolumn{9}{c}{GAUC score}                                                                                                                                                                                                                                                                                      \\ \hline
\multicolumn{1}{l|}{Base}                              & \multicolumn{1}{c}{-}                         & 59.69          & \multicolumn{1}{c}{-}                         & 60.16          & \multicolumn{1}{c}{-}                         & 60.58          & \multicolumn{1}{c}{-}                         & 61.27          \\
\multicolumn{1}{l|}{Oversampling\cite{weiss2004mining}}                      & \multicolumn{1}{c}{-}                         & 60.17          & \multicolumn{1}{c}{-}                         & 60.28          & \multicolumn{1}{c}{-}                         & 60.59          & \multicolumn{1}{c}{-}                         & 61.30          \\
\multicolumn{1}{l|}{ESMM\cite{ma_entire_2018}}                              & \multicolumn{1}{c}{-}                         & 60.53          & \multicolumn{1}{c}{-}                         & 60.90          & \multicolumn{1}{c}{-}                         & 61.13          & \multicolumn{1}{c}{-}                         & 61.76          \\
\multicolumn{1}{l|}{Naive Imputation}              & \multicolumn{1}{c}{-}                         & 60.14          & \multicolumn{1}{c}{-}                         & 60.39          & \multicolumn{1}{c}{-}                         & 60.56          & \multicolumn{1}{c}{-}                         & 61.39          \\
\multicolumn{1}{l|}{Naive IPW\cite{schnabel_recommendations_2016}}                         & \multicolumn{1}{c}{-}                         & 60.51          & \multicolumn{1}{c}{-}                         & 60.95          & \multicolumn{1}{c}{-}                         & 61.09          & \multicolumn{1}{c}{-}                         & 61.77          \\
\multicolumn{1}{l|}{Heuristic DR}                          & \multicolumn{1}{c}{-}                         & 60.01          & \multicolumn{1}{c}{-}                         & 60.30          & \multicolumn{1}{c}{-}                         & 60.65          & \multicolumn{1}{c}{-}                         & 61.35          \\
\multicolumn{1}{l|}{Joint Learning DR\cite{wang_doubly_2019}}                 & \multicolumn{1}{c}{-}                         & 60.43          & \multicolumn{1}{c}{-}                         & 60.83          & \multicolumn{1}{c}{-}                         & 60.97          & \multicolumn{1}{c}{-}                         & 61.67          \\ \hline
\rowcolor[HTML]{C0C0C0} 
\multicolumn{1}{l|}{\cellcolor[HTML]{C0C0C0}Multi-IPW} & \multicolumn{1}{c}{\cellcolor[HTML]{C0C0C0}-} & 60.70          & \multicolumn{1}{c}{\cellcolor[HTML]{C0C0C0}-} & \textbf{61.09} & \multicolumn{1}{c}{\cellcolor[HTML]{C0C0C0}-} & 61.25          & \multicolumn{1}{c}{\cellcolor[HTML]{C0C0C0}-} & 61.98          \\
\rowcolor[HTML]{C0C0C0} 
\multicolumn{1}{l|}{\cellcolor[HTML]{C0C0C0}Multi-DR}  & \multicolumn{1}{c}{\cellcolor[HTML]{C0C0C0}-} & \textbf{60.90} & \multicolumn{1}{c}{\cellcolor[HTML]{C0C0C0}-} & 60.99          & \multicolumn{1}{c}{\cellcolor[HTML]{C0C0C0}-} & \textbf{61.52} & \multicolumn{1}{c}{\cellcolor[HTML]{C0C0C0}-} & \textbf{62.28} \\ \hline
\end{tabular}
\label{table:product_result}
\end{table*}

\begin{table}[ht]
\centering
\caption{Results of comparison study on Public dataset: Ali-CCP. Experiments are repeated 10 times and mean $\pm$ 1 std of AUC scores are reported below. The best scores are bold-faced in each column. The rows that contain the models proposed in this paper are highlighted in color grey. }
\begin{tabular}{l|ll}
\hline
Model     & CVR AUC  & CTCVR AUC \\ \hline
Base     &    66.00 $\pm$ 0.37        &      62.07 $\pm$ 0.45     \\      
Oversampling \cite{weiss2004mining}     &    67.18 $\pm$ 0.32        &      63.05 $\pm$ 0.48 \\
ESMM-NS \cite{ma_entire_2018}     &    68.25 $\pm$ 0.44        &      64.44 $\pm$ 0.62                  \\
ESMM \cite{ma_entire_2018}      &    68.56 $\pm$ 0.37        &      65.32 $\pm$ 0.49                  \\ \hline
\rowcolor[HTML]{C0C0C0} 
Multi-IPW &    69.21 $\pm$ 0.42             &       65.30 $\pm$  0.50           \\
\rowcolor[HTML]{C0C0C0} 
Multi-DR  &  \textbf{69.29} $\pm$ \textbf{0.31}                &       \textbf{65.43} $\pm$  \textbf{0.34}         \\ \hline
\end{tabular}
\label{table:public_result}
\end{table}

\subsubsection{Production set experiment}
In production set experiment, we vary the dimensions of feature embedding vectors according to each feature's real size in order to minimize the memory usage. In order to have a fair comparison study, all the models in this experiment share $|batch|=10000$, MLP architecture $1024 \times 512 \times 256 \times 128 \times 32 \times 2$, adam optimizer with learning rate $lr=0.0005$. We also added $l2$ normalization to imputation model in Multi-DR, and the coefficient is $v=0.0001$.

\section{Results and Discussion}

In this section, we evaluate the proposed models and answer the questions raised in Section 4.

\subsection{Model Assessments (Q1)}

In this section, we report the experiment results in Table \ref{table:product_result}, \ref{table:public_result}. Multi-IPW and Multi-DR are clear winners over other baselines across all experiments. Meanwhile, we have the following observations:

\begin{itemize}
    \item In production dataset, Multi-IPW and Multi-DR consistently outperform Joint Learning DR \cite{wang_doubly_2019}. We reason that the performance improvement benefits from multi-task learning module, which remedies the data sparsity issue. 
    \item We notice that Multi-DR mostly has better performance than Multi-IPW.
    Recall that Multi-DR augments Multi-IPW by introducing an imputation model. Provided that $0 \leq \hat{e} \leq 2e$, the tail bound of Multi-DR is proven to be lower than that of Multi-IPW for any learned propensity score $\hat{p}_{u,i}, (u,i) \in \mathcal{O}$ \cite{wang_doubly_2019}. Therefore, Multi-DR is expected to perform better than Multi-IPW when the imputation model is well-trained.
    \item We observe that Multi-IPW/Naive IPW estimator are superior to Base in all experiments. 
    Compared with the Base, both IPW-based models introduce estimated propensities to correct the selection bias. 
    Recall that Theorem 1. ensures the CVR estimators are unbiased if the propensities are accurately estimated. 
    While in practice the estimated propensities may deviate from the real values, this control experiment  attests to “enough accuracy” of the propensity model.

\end{itemize}

The experiment results demonstrate that Multi-IPW and Multi-DR counter the selection bias and data sparsity issues in CVR estimation in a principled and highly effective way. In the next subsection, we will discuss other strengths of the proposed methods.

\subsection{Computational efficiency (Q2)}
\label{section:time_complexity}

\begin{figure*}[ht]
    \centering
    \includegraphics[width=150mm]{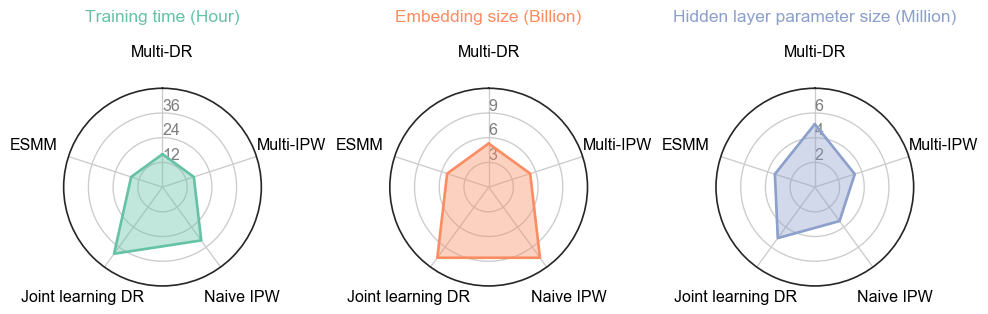}
    \caption{Computational cost of Multi-IPW and Multi-DR. The left subplot reveals the hours needed to complete one epoch of training. The middle subplot shows the size of embedding parameters of each model. The right subplot shows the size of hidden layer parameters of each model. Note that the proposed models achieve the best prediction performance, while have the lowest computational cost.}
    \label{fig:radar_chart}
\end{figure*}

In this section, we study the computational efficiency of the proposed models against the baselines under industrial setting. We summarize the records of training time and parameter space size of each model in Fig. \ref{fig:radar_chart}, and the cluster configuration in Table \ref{table:config}.

We observe that Multi-IPW and Multi-DR require less or equivalent training time compared with other baselines. Recall that multi-task learning method co-trains multiple tasks simultaneously as if they were one task. We can expect the training time being greatly shortened. Meanwhile, our methods are also economical in memory usage due to parameter sharing in the multi-task learning module. 

\begin{table}[ht]
\centering
\caption{Distributed cluster configuration}
\begin{tabular}{l|cc}
\hline
Cluster configuration & \multicolumn{1}{l}{Parameter Server} & \multicolumn{1}{l}{Worker} \\ \hline
\# instances            & 4                                    & 100                             \\
\# CPU           & 28 cores                             & 440 cores                       \\
\# GPU\tablefootnote{GPU specs: Tesla P100-PCIE-16G}               & -                                    & 25 cards                              \\
MEMORY (GB)           & 40                                   & 1000                            \\ \hline
\end{tabular}
\label{table:config}
\end{table}

\subsection{Hyper-parameters in model implementation}

 IPW bound $\tau$ is a hyper-parameter introduced in our model implementation to handle high variance of propensities. IPW bound clamps the propensities if the values are greater than the predefined threshold. A plausible IPW bound value is typically confined by $\tau \in [propensity_{min},propensity_{mean}]$. IPW bound percentage can be calculated as $\tau\%=\frac{\tau-propensity_{min}}{propensity_{mean}-propensity_{min}}$.
 In Multi-DR, imputation model will introduce the unclicked items to the training set. Empirically, most of the unclicked items will not be purchased by customers even if they were clicked. Therefore, including these unclicked items in training set will skew the data distribution and make the class imbalance issue worse. Therefore, Instead of adding all the unclicked samples, we under-sample them with a sampling rate $\lambda$. For example, if the number of clicked samples ($N_{clicked}$) is 100 and the batch size is 1000, $\lambda=1.5$ means that after under-sampling the samples we used to train Multi-DR is $N_{clicked} \times \lambda = 100 \times 1.5 = 150$. Note that without under-sampling, Multi-DR takes all samples in the batch as training samples.

\subsection{Empirical study on hyper-Parameter sensitivity (Q3)}

\begin{figure*}[ht]
    \centering
    \includegraphics[width=150mm]{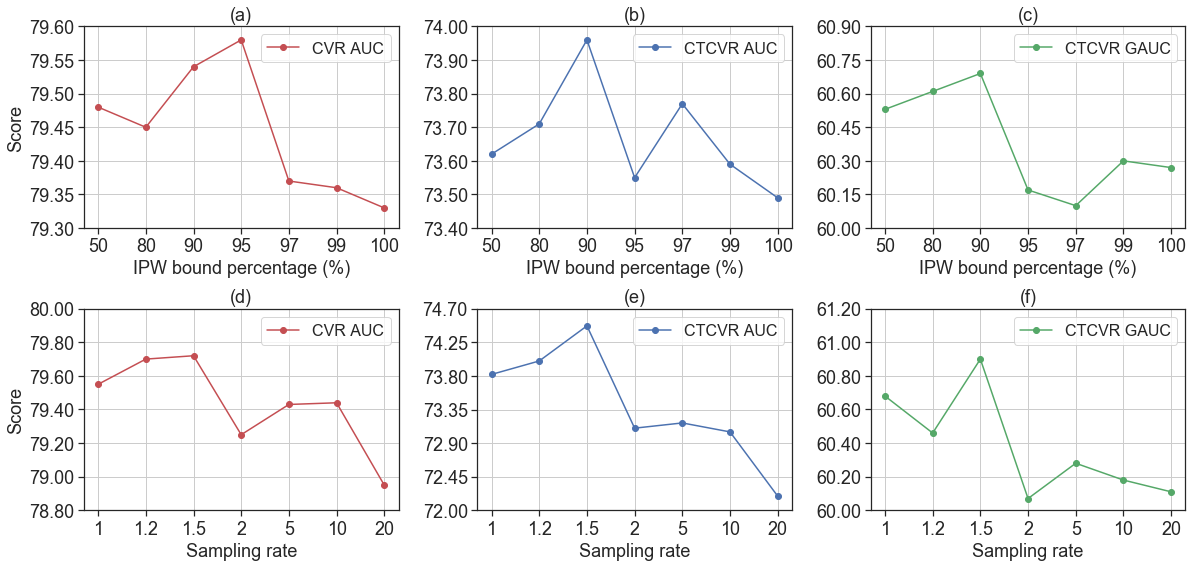}
    \caption{Results of parameter sensitivity experiments. }
    \label{fig:parameter_sensitivity}
\end{figure*}

In this section, we investigate how Multi-IPW and Multi-DR are affected by two important hyper-parameters in our model implementation, IPW bound $\tau$ and sampling rate $\lambda$. We evaluate the performance of Multi-IPW with varying IPW bound $\tau$. We observe that in Fig. \ref{fig:parameter_sensitivity}, when IPW bound $50\%<\tau\%<90\%$, prediction performance eventually improves as IPW bound increases. We can clearly see the performance drop of CVR AUC if the threshold is greater than $95\%$. We reason that larger IPW bound allows undesired higher variability of propensity scores, which may lead to sub-optimal prediction performance.

We evaluate the performance of Multi-DR with varying sampling rate $\lambda$. We observe that $\lambda=1.5$ produces the best prediction, and  the model performance starts decreasing when $\lambda>5$. We argue that, as the sampling rate increases, more unclicked samples are included to our training set, and it inevitably worsens the class imbalance issue, which typically causes predictive models to generalize poorly. On the contrary, introducing a small number of unclicked samples from the imputation model can boost our CVR prediction (see figure (d) when $\lambda \in [1.0, 1.5]$).

\section{Conclusion and future works}
In this paper, we proposed Multi-IPW and Multi-DR CVR estimators for industrial recommender system. Both CVR estimators aim to counter the inherent issues in practice: 1) selection bias, and 2) data sparsity. Extensive experiments with billions of data samples demonstrate that our methods outperform the state-of-the-art CVR predictive models, and handle CVR estimation task in a principled, highly effective and efficient way. Although our methods are devised for CVR estimation, the idea can be generalized to debiasing CTR estimation by exploiting the sequential pattern "item pool $\rightarrow$ exposure $\rightarrow$ click".




\bibliographystyle{unsrt}
\bibliography{references}

\end{document}